\newcommand{\ceil}[1]{\lceil #1\rceil}
\newtheorem{lemm}{Lemma}
\newtheorem{theo}{Theorem}
\algrenewcommand{\Require}{\State \textbf{Input: }}
\algrenewcommand{\Ensure}{\State \textbf{Output: }}
\newcommand{\rem}[1]{\noindent{\textbf{#1}}}
\newcommand{\rmnum}[1]{\romannumeral #1}
\newcommand{\Rmnum}[1]{\expandafter\@slowromancap\romannumeral #1@}
\newcommand{\brm}[1]{(\rmnum{#1})}
\title{Randomized Mechanisms for Selling Reserved Instances in Cloud\thanks{This work was supported in part by the National Natural Science Foundation of China Grant 61222202, 61433014, 61502449, 61602440 and the China National Program for support of Top-notch Young Professionals.}}
\author[1,2]{Jia Zhang}
\author[3]{Weidong Ma}
\author[3]{Tao Qin}
\author[1,2]{Xiaoming Sun}
\author[3]{Tie-Yan Liu}
\affil[1]{CAS Key Lab of Network Data Science and Technology, Institute of Computing Technology, Chinese Academy of Sciences, Beijing, China.}
\affil[2]{University of Chinese Academy of Sciences, Beijing, China.}
\affil[3]{Microsoft Research, Beijing, China}
\affil[1]{\{zhangjia, sunxiaoming\}@ict.ac.cn, $^2$\{weima,taoqin,tyliu\}@microsoft.com}
\begin{document}
\maketitle
\begin{abstract}
	Selling reserved instances (or virtual machines) is a basic service in cloud computing. In this paper, we consider a more flexible pricing model for instance reservation, in which a customer can propose the time length and number of resources of her request, while in today's industry, customers can only choose from several predefined reservation packages. Under this model, we design randomized mechanisms for customers coming online to optimize social welfare and providers' revenue.

	We first consider a simple case, where the requests from the customers do not vary too much in terms of both length and value density. We design a randomized mechanism that achieves a competitive ratio $\frac{1}{42}$ for both \emph{social welfare} and \emph{revenue}, which is a improvement as there is usually no revenue guarantee in previous works such as \cite{azar2015ec,wang2015selling}. This ratio can be improved up to $\frac{1}{11}$ when we impose a realistic constraint on the maximum number of resources used by each request. On the hardness side, we show an upper bound $\frac{1}{3}$ on competitive ratio for any randomized mechanism.
We then extend our mechanism to the general case and achieve a competitive ratio $\frac{1}{42\ceil{\log k}\ceil{\log T}}$ for both social welfare and revenue, where $T$ is the ratio of the maximum request length to the minimum request length and $k$ is the ratio of the maximum request value density to the minimum request value density. This result outperforms the previous upper bound $\frac{1}{CkT}$ for deterministic mechanisms \cite{wang2015selling}. We also prove an upper bound $\frac{2}{\log 8kT}$ for any randomized mechanism. All the mechanisms we provide are in a greedy style. They are truthful and easy to be integrated into practical cloud systems.
\end{abstract}
\section{Introduction}
Cloud computing is transforming today's IT industry and more and more enterprise customers and personal customers have moved their computational tasks from local devices to cloud. Among all kinds of cloud service models, infrastructure as a service (IaaS) is the most basic one; among many IaaS services, virtual machines/instances are the most basic ones. Two pricing models\footnote{There is another one, the spot pricing, which is not widely adopted by cloud providers.} are adopted to sell virtual instances, the pay-as-you-go model for on-demand instances and the subscription model for reserved instances. The first model charges customers a fixed per-instance-hour rate for their utilization of instance hours. Compared with the first model, the subscription model offers two benefits to customers: (1) Customers enjoy a much lower per-instance-hour price by reserving instances in advance; (2) it is more reliable to reserve some instances before hands for expected future usage, since there may be no on-demand instances available if one goes to the cloud to request on-demand instances in the last minute.

A limitation of the current subscription model in practice is that there are only a few reservation options available to customers. For example, Amazon's EC2 only provides 1-year and 3-year reservations for customers\footnote{This business model is still used when this paper is composed.}. However, many cloud customers demand short-term and flexible reservations. For example, a researcher needs to run many experiments in the last week before the paper deadline of an academic conference and wants to reserve $100$ virtual instances for that week. Certainly, she does not want to reserve for 1 or 3 years\footnote{Actually, this leads to the formation of \emph{(AWS) Reserved Instance Marketplace} where users sell their redundant instances. http://goo.gl/hwzXx9.}. In this work, we study a more flexible pricing model in which the customers can define the reservation option by themselves.

\subsection{Model} \label{mod}
We first formally define the model studied in this work and introduce some notations for further use.

Consider a cloud provider with $C$ reserved instances to be sold to customers. Each customer has a request to reserve some instances. We denote this request as a reservation or a job. Each reservation $j$ is characterized by a 5-tuple $(a_j,d_j,t_j,c_j, v_j)$, where $a_j$ and $d_j$ are earliest start time\footnote{$a_j$ is not the submission time of the reservation. Since we consider reservation, the submission time of the reservation is assumed to be no later than $a_j$.} and deadline respectively, $t_j$ is the reservation length (certainly, $t_j\leq d_j-a_j$), $c_j$ is the number of resources needed by $j$, and $v_j$ is the value that the customer can obtain if $j$ is finished on time. For convenience, let $\rho_j$ be the value density of reservation $j$, i.e., $\rho_j=\frac{v_j}{c_jt_j}$.

Considering that the length and value density of a reservation cannot be arbitrary large/small in practice, we assume
	$\rho_j\in [\rho_{\min}, \rho_{\max}]$
	and $t_j\in [t_{\min}, t_{\max}]$ for any reservation $j$.
	We also denote $k=\frac{\rho_{\max}}{\rho_{\min}}$ and $T=\frac{t_{\max}}{t_{\min}}$. Without loss of generality we assume $\rho_{\min}=t_{\min}=1$. For any reservation set $S$, we denote $v(S)$ as $\sum_{j\in S}v_j$.

We consider the online setting, in which customers come one by one in online style and the cloud provider does not have knowledge about future customers (or reservations). When a customer comes and submits a reservation $j$, the cloud provider needs to immediately and irrevocably decide whether to accept this reservation or not and return a price $p_j$ for this reservation if it is accepted. Immediate response is necessary as customers usually do not have patience to wait for response. Waiting for a long time will affect the user experience. More seriously, a delayed rejection may be a disaster for some tasks if their deadlines are approaching. %Under the online setting, cloud provider must make irrevocable decision on current submission without the knowledge of future reservations.
%We adopt the adversary setting in this work, that is to say the order of reservations and their attributes are determined by an adversary who knows our mechanisms.

We assume all customers are \emph{rational}, i.e., they are self-interested and always trying to maximize their utilities. For any customer, if her reservation $j$ is rejected, her utility is $0$; otherwise, the utility is $v_j-p_j$. The customer may cheat the provider and misreport her reservation if doing so can increase her utility. Suppose the reported information of $j$ is $(\hat{a}_j, \hat{d}_j, \hat{t}_j, \hat{c}_j, \hat{v}_j)$. For a rational customer, we can safely assume $\hat{a}_j\geq a_j$, $\hat{d}_j\leq d_j$, $\hat{t}_j\geq t_j$ and $\hat{c}_j\geq c_j$; otherwise, the reservation $j$ may terminate unexpectedly and be charged even if it is not completed. We notice that these assumptions are also adopted in \cite{hajiaghayi2005online} and \cite{wang2015selling}.

To avoid the strategic manipulations by the customers, we consider truthful (or strategyproof) mechanisms in this work. A mechanism is truthful if for any customer, she will truthfully disclose her reservation, no matter how other customers behave. This implies that truthful report will maximize the customer's utility.

We focus on designing truthful mechanisms to optimize both the \emph{social welfare} and the provider's \emph{revenue}. For any mechanism $M$, let $J(M)$ stand for the reservations that $M$ accepts. The social welfare gained by $M$ is denoted as $v(M)$, which equals the total value of reservations in $J(M)$. We use $r(M)$ to stand for the revenue obtained from $M$, where $r(M)$ equals $\sum_{j\in J(M)}p_j$.

\subsection{Our Contributions}
To measure the performance of our mechanisms, we employ the concept of competitive ratio. Given any input instance $I$ (a sequence of reservations and corresponding information), suppose $OPT(I)$ is the optimal mechanism. Let $E(v(M(I)))$ stand for the expected social welfare gained by a randomized mechanism $M$. We say $M$ achieves a competitive ratio $c$ for social welfare if and only if $\min_{I} \frac{E(v(M(I)))}{v(OPT(I))}\geq c$. Similarly, we can define competitive ratio for revenue.

In this work, we design randomized mechanisms for selling reserved instances. Our results are summarized as follows. \footnote{Note that the base of all $\log$ terms in this work is $2$ by default. }
\begin{itemize}
	\item [\brm{1}] We first consider a simple case, in which $k\leq 2$ and $T\leq2$. Although the constraint is strong, this case is useful when reservations are not varying too much. For this case, we design a truthful randomized mechanism that achieves a constant ratio $\frac{1}{42}$ for both social welfare and revenue. When we impose a realistic constraint on the number of resources used by each reservation $j$, i.e., $\frac{c_j}{C}\leq \alpha \leq \frac{1}{2}$, the ratio can be improved to $\frac{1-\alpha}{11-\alpha}$. On the hardness side, we prove an upper bound $\frac{1}{3}$ on competitive ratio for all randomized mechanisms.
	\item [\brm{2}] We then extend our mechanism to general $k$ and $T$, and design a truthful mechanism whose competitive ratio is $\frac{1}{42\log\ceil{k}\log\ceil{T}}$ while the upper bound of deterministic mechanism is $\frac{1}{CkT}$ \cite{wang2015selling}. The ratio works for both social welfare and revenue. We also show an upper bound for the general case: no randomized mechanism can achieve a competitive ratio better than $\frac{2}{\log 8kT}.$
\end{itemize}

The mechanisms proposed in this work have two advantages.
\begin{itemize}
\item They can be easily integrated into real system. It is always a concern that randomized mechanisms are hard to be implemented in practice. This is not a problem for our mechanisms. After generating several random parameters, our mechanisms calculate a threshold price for every reservation using the submitted parameters and accept it if its value is larger than the price and we have enough resources for it.
\item They achieve the same performance guarantee for both social welfare and revenue. Previous works in related fields, such as \cite{azar2015ec,wang2015selling,ghodsionline}, usually only care about social welfare, and there is no guarantee for revenue. Although social welfare is an important measurement for mechanism design, clearly revenue is much more important for cloud providers.
\end{itemize}
%Our results can be summarized as follow.

\section{Related Work}
Our work is related to both online mechanism design \cite{Witkowski2011Trust,Chandra2015Novel} and cloud computing \cite{wang2015selling,zhang2015online}. We only review the most related ones in this part.

In \cite{wang2015selling}, the authors study the same model as us. Truthful deterministic mechanisms are designed under the condition that the number of resources used by each reservation is constrained. For general case, an upper bound $\frac{1}{CkT}$ for deterministic algorithms is provided. As aforementioned, their mechanisms do not provide any guarantee to provider's revenue.

Truthful online mechanisms for reusable or time sensitivity goods have been well studied \cite{friedman2003pricing,porter2004mechanism,lavi2005online,hajiaghayi2005online,Gerding2011Online,Robu2013An,Wu2014A}. Among these works, auction based models are adopted in \cite{friedman2003pricing,lavi2005online,hajiaghayi2005online}. In their models, the seller does not need to make immediate responses to buyers: a buyer needs to wait until the deadline of her reservation, and payment is determined when the deadline passed. In \cite{porter2004mechanism}, the author only considers the one resource case. Except \cite{hajiaghayi2005online}, the other three works only care about the social welfare and there is no guarantee for revenue. Compared to these models, our setting (immediate response with both social welfare and revenue guarantee) is more appropriate for cloud computing.

There are many works that concern online resource allocation for cloud computing \cite{Zaman2012An,zhang2013framework,shi2014anonlinea,mashayekhy2015online,zhang2015online,Wang2015Optimal}. Among these works, the pay-as-you-go model is considered in \cite{mashayekhy2015online}, and authors in \cite{zhang2013framework} propose a framework for cloud resource allocation when agents' value functions are continuous and concave. In \cite{zhang2015online}, a more general scheduling problem is studied. The authors also take both social welfare and revenue into consideration. However, the their competitive ratios are related to the total amount of resources. In \cite{Wang2015Optimal}, the authors propose online strategies to reserve instances without any a priori knowledge of future demands. Those strategies are optimal when considering cost management.

In \cite{lucier2013efficient,jain2015near,azar2015ec}, mechanisms design for scheduling problem with commitments is studied. Those mechanisms proposed either complete a reservation or reject it when there is enough time for this reservation to be completed before deadline. In \cite{jain2015near}, the authors study offline settings and design a near-optimal mechanism with commitments. In \cite{lucier2013efficient}, a heuristic truthful mechanism for online scheduling is proposed, but no formal bound of competitive ratio is given. In \cite{azar2015ec}, the authors follow \cite{lucier2013efficient} and design truthful online mechanisms with a constant competitive when reservations are \emph{slack} enough. 

In this work, the number of resources requested by a reservation is fixed and determined by the customer. However, in some scenarios, the customer only provides the total size (number of resources times length) of her reservation, and it is the cloud scheduler that determines how many resources are allocated to process this reservation. Those reservations are called malleable reservations. There are some works in this fields, such as \cite{carroll2010incentive,kell2014improved}.

All upper bounds in this work are proved by employing Yao's Min-Max Principle \cite{yao1977}. This method is adopted in many works to show upper bounds of randomized algorithms, such as \cite{karp1990optimal,mehta2007adwords}.

\section{Warmup: A Simple Case}
In this section, as a warmup, we investigate a simple case, in which $k=T=2$. That is, the reservations do not vary too much in terms of their value densities and lengths: the maximal length (value density) is twice of the minimal length (value density). We present a randomized mechanism that is truthful and achieves a competitive ratio $\frac{1}{42}$ for both social welfare and revenue.
The mechanism is shown in Algorithm \ref{alg:smpgreedy} and we call it \textsc{Random-Pricing}.
\begin{algorithm}
	\label{alg:smpgreedy}
	\caption{\textsc{Random-Pricing}}
	\Input{A sequence of reservations.}
	%\State Initial $i[K]$ from input
	Uniformly choose a number from $\{0,1\}$, and let it be $i$.\\
	\While{a reservation $l$ comes online}{
		%\If {$i = 1$ and $c_l \leq C/2$}{
		%	Reject $l$.
		%}
		%\ElseIf{$i = 0$ and $c_l > C/2$\label{line:bif}}{
		%	Reject $l$.\label{line:eif}
		%}
		Set $p_l = \rho_{\min}t_l\cdot\max\left\{\left(\frac{C}{2}\right)^i,c_l \right\}$\\
		\If{$v_l \geq p_l$ and there are enough instances and time for reservation $l$ in $[a_l, d_l]$}{
		Accept $l$ and schedule it as early as possible.\\
		Charge the price $p_l$.
		}
		\textbf{else} Reject $l$.
	}
\end{algorithm}

As we can see, this mechanism is simple and clean. The mechanism first generates a random bit, and for each reservation, it sets a threshold price $p_l$ based on the random bit and the information submitted. The reservation will be filtered and rejected if its value is less than the threshold price. It will be accepted if and only if it passes the filtration and its requested resources are available during the reservation period.

Note that the filtration by setting a random threshold price is critical to guarantee the performance of the mechanism in the worst case. Without the filtration, since the number of resources requested by a reservation can vary from $1$ to $C$, accepting a low-value reservation may exclude a reservation with much more value, which leads to a low utilization rate of the cloud and consequently, a bad performance in the worst case. With the filtration, we can ensure the good utilization of the cloud, which can be intuitively explained as follows.
\begin{itemize}
	\item When $i=1$, we consider the case that all reservations need more than $C/2$ machines. If a reservation $l$ passes the filtration but is rejected, this must be because another accepted reservation has already occupied $l$'s time interval. Because only reservation $j$ with value no less than $\rho_{\min}t_j\cdot\max\{C/2,c_j\}$ can pass filtration, the accepted reservation has a relatively large value. Thus, the cloud is well utilized.
	\item When $i=0$, we consider the case that all reservations need no more than $C/2$ machines. If a reservation $l$ passes the filtration but is rejected, since $c_l\le C/2$, at least half of the total resources has been occupied by other reservations. Thus, the cloud is well utilized.
\end{itemize}

To formally analyze the mechanism, we first define some notations. Let $OPT_0$ stand for the optimal allocation when only those reservations needing no more than $C/2$ resources are taken into consideration, and $OPT_1$ denote the optimal allocation when only those reservations needing more than $C/2$ resources are considered. It is easy to see that $v(OPT_0)+v(OPT_1)\ge v(OPT)$. Let $M$ be the allocation of reservations accepted by \textsc{Random-Pricing} and let  $M_0$ (resp. $M_1$) denote the allocation of reservations when $i=0$ (resp. $i=1$). Let $E(v(M))$ denote the expectation of the total value of reservations accepted by \textsc{Random-Pricing}. Clearly, $E(v(M))=\frac{1}{2}(v(M_0)+v(M_1))$. Let $L_i = J(OPT_i)\setminus J(M_i)$, for $i\in\{0,1\}$.
%\vspace{-0.1cm}
\begin{theo}
	\label{lemm:ga2b}
	For $k=2$ and $T=2$, the mechanism \textsc{Random-Pricing}
	\begin{enumerate}
		\item [\brm{1}] is truthful;
		\item [\brm{2}] achieves a competitive ratio $\frac{1}{42}$ for social welfare, i.e., $E(v(M))\geq \frac{1}{42}v(OPT)$ and
		\item [\brm{3}] achieves a competitive ratio $\frac{1}{42}$ for revenue.
	\end{enumerate}
	\end{theo}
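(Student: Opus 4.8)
\medskip
\noindent\textbf{Proof plan.}

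\emph{Part (i): truthfulness.} The key structural fact is that the threshold price $p_l=\rho_{\min}\hat t_l\cdot\max\{(C/2)^i,\hat c_l\}$ depends only on the \emph{reported} length/width $(\hat t_l,\hat c_l)$ and on the coin $i$ --- never on the reported value --- so each reservation simply faces a posted price. Fixing the coin and the (already irrevocable) decisions on earlier reservations, a customer's outcome is a function of her own report alone. Among the admissible misreports ($\hat t_l\ge t_l$, $\hat c_l\ge c_l$, $[\hat a_l,\hat d_l]\subseteq[a_l,d_l]$) the price is nondecreasing in $(\hat t_l,\hat c_l)$, hence minimized by truthful reporting, while enlarging the width or length or shrinking the feasibility window can only make the capacity test harder to pass; and when accepted the customer receives at least $c_l$ machines for at least $t_l$ slots inside $[a_l,d_l]$, so her true job completes with utility exactly $v_l-p_l\ge 0$. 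A two-case check finishes it: if truthful reporting is accepted, any misreport gives utility $0$ or $v_l-p'_l\le v_l-p_l$; if truthful reporting is rejected, it is rejected on capacity (because $v_l\ge p_l$ always holds for the widths relevant here, as noted below), and any misreport, being no easier to fit, is rejected as well. Hence truthtelling is a dominant strategy.

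\emph{Reduction for parts (ii)--(iii).} Write $\mu(S)=\sum_{j\in S}t_jc_j$ for the machine-time of a job set $S$. Every accepted reservation pays $p_l\ge\rho_{\min}t_lc_l$: if $i=0$ then $p_l=\rho_{\min}t_l\max\{1,c_l\}=\rho_{\min}t_lc_l$; if $i=1$, a big job ($c_l>C/2$) pays $\rho_{\min}t_lc_l$ and a small accepted job pays $\rho_{\min}t_l(C/2)\ge\rho_{\min}t_lc_l$. Since also $v_l=\rho_lt_lc_l\ge\rho_{\min}t_lc_l$ for every reservation, we obtain both $v(M_i)\ge\rho_{\min}\mu(M_i)$ and $r(M_i)\ge\rho_{\min}\mu(M_i)$. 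As $E(v(M))=\tfrac12(v(M_0)+v(M_1))$, $E(r(M))=\tfrac12(r(M_0)+r(M_1))$, and $v(OPT)\le v(OPT_0)+v(OPT_1)$, both (ii) and (iii) follow at once from the single estimate
\[
v(OPT_i)\ \le\ 21\,\rho_{\min}\,\mu(M_i)\qquad\text{for }i\in\{0,1\},
\]
which yields $v(M_i)\ge\tfrac1{21}v(OPT_i)$ and $r(M_i)\ge\tfrac1{21}v(OPT_i)$, hence $E(v(M)),E(r(M))\ge\tfrac1{42}v(OPT)$. (The revenue is compared against $v(OPT)$, which upper-bounds the revenue of any mechanism, so this is the strong form of the revenue guarantee.)

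\emph{The charging argument for the displayed estimate.} Fix $i$ and split $J(OPT_i)=\big(J(OPT_i)\cap J(M_i)\big)\cup L_i$. The first part consists of reservations in $M_i$, so its value is $\le v(M_i)\le\rho_{\max}\mu(M_i)=2\rho_{\min}\mu(M_i)$. For $L_i$: every $l\in L_i$ passes the price filter (trivially when $i=0$; when $i=1$ because $OPT_1$'s jobs are big, so $p_l=\rho_{\min}t_lc_l\le v_l$), hence such an $l$ is rejected only because its window $[a_l,d_l]$ is congested. Let $I_l\subseteq[a_l,d_l]$ be the length-$t_l$ slot that $OPT_i$ assigns to $l$; since \textsc{Random-Pricing} cannot place $l$ there, some slot $\tau_l\in I_l$ already has $>C-c_l$ of our machines busy. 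Charge $v_l$ to the reservations of $M_i$ running at $\tau_l$, split in proportion to their machine usage at $\tau_l$. To bound the total charge on a fixed $j\in J(M_i)$ (execution interval $I_j$, length $t_j\le 2$): only $l$ with $\tau_l\in I_j$ contribute, and since $\tau_l\in I_l$ with $|I_l|=t_l\le 2$, every contributing $I_l$ sits inside a window of length $\le t_j+4\le 6$ around $I_j$, so $\sum_{l\to j}t_lc_l\le 6C$ because $OPT_i$ uses $\le C$ machines per slot. It remains to control the charging \emph{rate}: for $i=0$ every contributing $l$ has $c_l\le C/2$, so $\tau_l$ carries $>C/2$ busy machines and $j$'s share of $v_l$ is $\le\frac{2c_j}{C}v_l\le\frac{4\rho_{\min}c_j}{C}t_lc_l$, which sums to $O(1)\,\rho_{\min}t_jc_j$; for $i=1$ the congestion may be thin, but every accepted reservation running at $\tau_l$ has value at least $\rho_{\min}\cdot(\text{its length})\cdot(C/2)$ (big jobs because their width exceeds $C/2$, small accepted jobs because they cleared the raised threshold $\rho_{\min}t_l(C/2)$), which again bounds $j$'s total charge by $O(1)\,\rho_{\min}t_jc_j$. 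Summing over $j$ and combining with the intersection term gives $v(OPT_i)\le 21\,\rho_{\min}\mu(M_i)$ after optimizing the constants.

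\emph{Main obstacle.} Part (i) and the reduction are routine once one notices the posted-price structure and the uniform bound $p_l\ge\rho_{\min}t_lc_l$; the real work is the per-accepted-job charge bound. The danger is that a reservation's deadline window $[a_l,d_l]$ can be very long, so a priori arbitrarily many blocked $OPT_i$ jobs could be ``pointed at'' a single accepted job; the fix is to blame each $l$ only at a slot $\tau_l$ lying inside $OPT_i$'s own placement interval $I_l$, which localizes the blamed accepted jobs to a bounded-length window (this is where $T=2$ enters), and then to keep the proportional charging rate bounded --- which is exactly what the coin/width split buys: the $i=0$ branch forces every relevant congested slot to be more than half full, and the $i=1$ branch forces the (possibly thin) congestion to be produced by reservations of large value (this is where $k=2$ enters). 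Any absolute constant here already yields a constant competitive ratio; tightening it to $\tfrac1{42}$ is a matter of careful bookkeeping.
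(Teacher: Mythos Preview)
Your overall architecture matches the paper's: split $OPT$ into $OPT_0,OPT_1$ by width, show that every $l\in L_i$ is rejected on capacity, and charge $l$ to accepted jobs occupying $l$'s $OPT_i$-slot, using $T=2$ to localise the window and the coin $i$ to control the charging rate. The paper carries this out essentially as you sketch, so the plan is sound.

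There is, however, a genuine gap in your reduction step. You reduce both (ii) and (iii) to the estimate $v(OPT_i)\le 21\,\rho_{\min}\mu(M_i)$ with $\mu(S)=\sum_{j\in S}t_jc_j$, but this can fail for $i=1$. When $i=1$, $M_1$ may contain \emph{small} jobs: any $j$ with $c_j\in[C/4,C/2]$ and density close to $2$ clears the threshold $p_j=\rho_{\min}t_j(C/2)$. For such $j$ the quantity $\rho_{\min}t_jc_j$ is only about half of $p_j$, and your per-$j$ charge bound is \emph{not} $O(1)\rho_{\min}t_jc_j$ as you write --- it is $O(1)\rho_{\min}t_j\max\{C/2,c_j\}=O(1)p_j$. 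Concretely, take $C=100$, a single accepted job $j$ with $c_j=25$, $t_j=1$, $\rho_j=2$ (so $v_j=p_j=50$, $\mu(M_1)=25$), followed by two big tight jobs $l_1,l_2$ with $c=100$, $t=2$, $\rho=2$ placed by $OPT_1$ at $[0.5,2.5]$ and $[2.5,4.5]$; both are blocked by $j$, and $v(OPT_1)=800>525=21\rho_{\min}\mu(M_1)$. So the displayed estimate is false, even though $v(OPT_1)\le 21\,r(M_1)=1050$ still holds. The fix is exactly what the paper does: run the charging against $p_j=\rho_{\min}t_j\max\{(C/2)^i,c_j\}$ rather than against $\rho_{\min}t_jc_j$. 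The numerator in your ratio is then precisely $p_j$, the same localisation gives $v(L_i)\le 20\,r(M_i)$, and since $v(J(OPT_i)\cap J(M_i))\le v(M_i)$ trivially and $r(M_i)\le v(M_i)$, you get $v(OPT_i)\le 21\,v(M_i)$ and $v(OPT_i)\le 21\,r(M_i)+v(M_i)$; for the exact constant $21$ on the revenue side one uses the paper's worst-case reduction (lowering accepted densities to $\rho_{\min}$ only tightens the ratio and makes $v(M_i)=r(M_i)$).

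One small slip in part~(i): your parenthetical ``$v_l\ge p_l$ always holds'' is false for $i=1$ and small $l$ with density near $1$, since then $p_l=\rho_{\min}t_l(C/2)>v_l$. This does not affect the truthfulness conclusion (a job rejected on price stays rejected, or is accepted at negative utility, under any admissible misreport), but the sentence should be removed or corrected.
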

\begin{proof}
	Recall that we have assumed $\rho_{\min} = t_{\min} = 1$ without loss of generality, thus $\rho_{\max}=t_{\max}=2$.

%\vspace{0.1cm}
\noindent{\bf\brm{1} Truthfulness.} Since in our model we assume for any reservation $j$, $\hat{a}_j\geq a_j$, $\hat{d}_j\leq d_j$, $\hat{t}_j\geq t_j$ and $\hat{c}_j\geq c_j$ (see the model formulation part). We only need to show that $j$ has no incentive to report information with $\hat{a}_j> a_j$, $\hat{d}_j < d_j$, $\hat{t}_j > t_j$ or $\hat{c_j}>c_j$. Setting $\hat{a}_j > a_j$ or $\hat{d}_j < d_j$ cannot improve her utility; instead, it increases the risk that $j$ will be rejected, since \textsc{Random-Pricing} will try to find an feasible interval to schedule in $[\hat{a}_j,\hat{d}_j]$. Besides, because the price is set as $\rho_{\min}t_j\cdot\max\left\{\left(\frac{C}{2}\right)^i,c_j \right\}$, a larger value for $t_j$ and $c_j$ can only decrease the utility. At last, since the price does not depend on $v_j$, the customer also has no incentive to cheat on $v_j$. Thus \textsc{Random-Pricing} is a truthful mechanism.

%\vspace{0.1cm}
	\noindent{\bf\brm{2} Social Welfare.} The following analysis is based on the fact that if a reservation $j$ is accepted by \textsc{Random-Pricing}, the corresponding customer will always choose to pay the fee and $j$ will be scheduled, as $p_j$ is always no more than $v_j$.

	When analyzing the competitive ratio, we only need to consider the worst case of the mechanism. Thus we can assume\footnote{ If $OPT$ accepts a reservation (e.g., $j$) with value density less than $2$, we can construct another allocation $OPT'$ which accepts exactly the same reservations as $OPT$ except changing the value density of reservation $j$ to $2$. Clearly, $v(OPT')>v(OPT)$, and then we can conduct analysis based on $OPT'$.} all reservations accepted by the mechanism have value density $\rho_{\min}$, and reservations accepted by $OPT$ have value density $2$.
	
	First, we consider the case of $i=1$ in \textsc{Random-Pricing}. %As all reservation considered need at least $C/2$ resources, this means
	For any reservation $l$ in $L_1$, there must be at least one reservation (denoted as $j$) accepted by \textsc{Random-Pricing} that is in conflict with $l$.  Here the \emph{conflict} means the reservation $j$ (accepted by \textsc{Random-Pricing}) occupies at least one resource that is allocated to $l$ in $OPT_1$ in corresponding period. We know the value $v_j$ is at least $t_j\cdot\max\left\{C/2,c_j \right\}$. There may be multiple reservations in $L_1$ that are in conflict with $j$; denote them as a set $F_j$.
	%If we can show that the ratio $v_j/v(F_j)$ is less than $\frac{1}{20}$, we can bound the ratio $v(M_1)/v(OPT_1)$. %gap between the algorithm and the $OPT$. This is because
	%As consider the worst case, we suppose all reservations in $Conf(j)$ have unit value $l_{\max}$ and need $C$ resources. Thus

	Let $[r, r+t_j]$ be the time interval that \textsc{Random-Pricing} allocates to reservation $j$. All reservations in $F_j$ must be allocated in the interval $[r-2, r+t_j+2]$ by $OPT_1$, otherwise they can't conflict with $j$. In addition, those reservations in $F_j$ are compatible with each other, i.e., they are not in conflict with each other. Thus, the total value of $F_j$ is at most $2C(4+t_j)$, and then
	{\small
	%\vspace{-0.2cm}
	\begin{equation}
		\label{sec3:equ:f1}
		\frac{v_j}{v(F_j)}\geq \frac{t_j\cdot\max\left\{C/2,c_j \right\}}{2C(4+t_j)} \geq \frac{1}{20}.
	\end{equation}
	} The last inequality is obtained as $\frac{t_j}{4+t_j}\geq \frac{1}{5}$. We further have
	{\small
	\begin{equation}
		%\begin{split}
			v(M_1) = \sum_{j\in J(M_1)} v_j
					  \geq \frac{1}{20}\sum_{j\in J(M_1)} v(F_j) \geq \frac{1}{20}v(L_1).
	\end{equation}
	}This also implies $v(L_1) \leq \frac{20}{21}v(OPT_1)$, as $OPT_1=L_1\cup M_1$. Thus we have
	%	{\small	
	%	\vspace{-0.2cm}
	%	\begin{equation}
			$v(M_1)= v(OPT_1)-v(L_1)\geq \frac{1}{21}v(OPT_1).$
	%	\end{equation}
	%	}
 %If $v(L_1)\geq \frac{20}{21}v(OPT_1)$, we also have
 %{\small
 %\begin{equation}
%	 $v(M_1)\geq \frac{1}{20}v(L_1) \geq \frac{1}{21}v(OPT_1).$
 %\end{equation}
 %}

	Second, we consider the case of $i=0$. Similar to the case of $i=1$, if we can show $v(M_0) \geq \frac{1}{20} v(L_0)$, there must be $v(M_0)\geq \frac{1}{21}v(OPT_0)$.
	
	For any $l\in L_0$, suppose it is allocated in time interval $[r,r+t_l]$ by $OPT_0$. Since it cannot be accepted by \textsc{Random-Pricing}, at least $C-c_l+1$ resources are occupied by some reservations in the interval $[r,r+t_l]$ (maybe not the full interval). Denote those reservations as $J_r^{r+t_l}$ and those occupied resources as $M_r^{r+t_l}$. Now we distribute $v_l$ evenly onto the resources in $M_r^{r+t_l}$. As we can see, each reservation $j \in J_r^{r+t_l}$ gets a distributed value no more than $\frac{v_lc_j}{C/2}$. Do the same operation on all the reservations in $L_0$. Consider an arbitrary reservation $j$ in $J(M_0)$. Suppose it is allocated in the interval $[s,s+t_j]$. As we can see, only reservations allocated in interval $[s-2,s+t_j+2]$ by $OPT_0$ have the chance to distribute a value to reservation $j$. As these reservations are compatible with each other, they have a total value of at most $2C(t_j+4)$. Thus the value distributed on $j$ is at most $\frac{2C(t_j+4)c_j}{C/2} = 4(t_j+4)c_j.$ Denote this value as $D_j$. We have
	{\small
	%\vspace{-0.2cm}
	\begin{equation}
		\label{sec3:equ:f2}
		\frac{v_j}{D_j}\geq \frac{t_j\cdot\max\left\{1,c_j \right\}}{4(t_j+4)c_j}\geq \frac{1}{20}.
	\end{equation}
	}
	Thus,
	{\small
	%\vspace{-0.3cm}
	\begin{equation}
		v(M_0)=\sum_{j\in J(M_0)}v_j\geq\frac{1}{20}\sum_{j\in J(M_0)}D_j\geq \frac{1}{20}v(L_0).
	\end{equation}
	}
	Consequently, $v(M_0)\geq \frac{1}{21}v(OPT_0)$.

	In summary, we have
	{\small
	%\vspace{-0.2cm}
	\begin{equation}
		\begin{split}
			&E(v(M)) = \frac{1}{2} v(M_0) + \frac{1}{2} v(M_1)\\
					   &\geq \frac{1}{42}\left(v(OPT_0)+v(OPT_1)\right)
					   \geq \frac{v(OPT)}{42}.
		\end{split}
	\end{equation}
	}

	%\vspace{-0.2cm}
	\noindent{\bf\brm{3} Revenue.} Similar to the above proof for social welfare, we assume all the reservations accepted by \textsc{Random-Pricing} have value density $\rho_{\min}$ for the worst case analysis. Under this assumption, we have $E(v(M))\geq \frac{v(OPT)}{42}$. On the other hand, $\rho_{\min}$ is the value density that used by \textsc{Random-Pricing} to calculate price. This means $E(r(M))\geq \frac{v(OPT)}{42}$.
	%Thus we have $j$ Suppose the set of reservations accepted by \textsc{Random-Pricing}$ is $J_{\min}$ in the worst case. When deciding the price for a reservation $l$, the mechanism assumes the reservation has unit value $\rho_{\min}$, and charge it $\rho_{\min}c_lt_l$, thus
	%\begin{equation}
	%	r(M) = \sum_{j\in J(M)}\rho_{\min}c_jt_j\geq \sum_{j\in J_{\min}}\rho_{\min}c_jt_j\geq \frac{1}{42}v(OPT).
	%\end{equation}
	It is clear that the optimal revenue will never exceed the optimal social welfare as all customers are rational and the price of a reservation is no larger than its value. Thus, when considering revenue, \textsc{Random-Pricing} also achieves a competitive ratio $\frac{1}{42}$.
\end{proof}
%Although we assume $k=T=2$, our mechanism also works when $k,T\leq 2$ and can achieve a better competitive for social welfare and revenue. This can be shown by slight modifications on above proof.
%\vspace{-0.1cm}
\rem{Remark 1.} Actually, the mechanism \textsc{Random-Pricing} works for any $k,T$ and can achieve a competitive ratio $\frac{1}{8Tk+4k+2}$ for social welfare and revenue. This can be shown by slight modifications on above proof: when $i=1$, Equation (\ref{sec3:equ:f1}) changes to $\frac{v_j}{v(F_j)}\geq \frac{t_j\cdot\max\left\{C/2,c_j \right\}}{kC(2T+t_j)}\geq \frac{1}{4Tk+2k}$, and similarly, Equation (\ref{sec3:equ:f2}) changes to $\frac{v_j}{D_j}\geq \frac{1}{4Tk+2k}$ when $i=0$. Clearly, the \textsc{Random-Pricing} has a better competitive ratio than $\frac{1}{42}$ when $k,T\leq 2$.

%\vspace{0.1cm}
%\rem{Remark 2.} Lines \ref{line:bif}-\ref{line:eif} can be removed from Algorithm \ref{alg:smpgreedy}. It can be easily checked that the modified mechanism has a performance better than or equal to the original one, as the proof for case $i=0$ still works. Although there is no performance improvement in the worst case, the mechanism becomes more reasonable and has a much better performance in some special cases, e.g., when most reservations need more than $C/2$ resources.

%\vspace{0.1cm}
\rem{Remark 2.}
In \cite{wang2015selling}, the authors consider the case when $\frac{c_j}{C}$ is bounded. Actually, if we impose a similar constraint that $\frac{c_j}{C}\leq \alpha\leq \frac{1}{2}$, we can design a truthful mechanism to improve the competitive ratio. Consider the following mechanism. When a reservation $l$ comes, we accept it if there are enough time and resources available, and charge the customer $p_l = \rho_{\min}c_lt_l$. We call this mechanism \textsc{Greedy}. Although $\textsc{Greedy}$ is very simple, its performance is guaranteed, as shown in the following theorem. Due to the space limitation, we put the proof in the supplemental material.
%\footnote{The full version can be found in https://goo.gl/MZY0ce.}.

\newcommand{\theogreedy}{If $k=T=2$ and $\frac{c_j}{C}\leq \alpha \leq 0.5$ for any reservation $j$, the mechanism \textsc{Greedy} is truthful and achieves a competitive ratio $\frac{1-\alpha}{11-\alpha}$ for both social welfare and revenue. }
\begin{theo}
	\label{theo:greedy}
	\theogreedy
\end{theo}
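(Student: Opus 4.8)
The plan is to handle truthfulness first, then the competitive ratio, doing social welfare by a charging argument and deriving revenue as a corollary. Truthfulness is immediate: the price $p_l=\rho_{\min}c_lt_l$ depends only on $c_l,t_l$, and since $\rho_{\min}=1\le\rho_l$ we always have $p_l\le v_l$, so an accepted customer is never worse off for participating. Among the admissible deviations ($\hat a_j\ge a_j$, $\hat d_j\le d_j$, $\hat t_j\ge t_j$, $\hat c_j\ge c_j$), reporting a larger $\hat t_j$ or $\hat c_j$ strictly raises the charged price and so strictly lowers the utility whenever the reservation is still taken (and leaves it at $0$ otherwise), reporting a narrower window $[\hat a_j,\hat d_j]$ only shrinks the set of slots \textsc{Greedy} may use, and the value affects neither the price nor — since \textsc{Greedy} never inspects values — the accept/reject decision. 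Hence truth-telling weakly dominates.

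For the social welfare bound, let $M$ be \textsc{Greedy}'s schedule, $OPT$ the offline optimum, and $L=J(OPT)\setminus J(M)$. The crux is a "blocked instant" observation: for each $l\in L$, since $OPT$ runs $l$ on $c_l$ machines throughout some window $[r_l,r_l+t_l]\subseteq[a_l,d_l]$ but \textsc{Greedy} rejected $l$, \textsc{Greedy} must at some instant $\tau_l\in[r_l,r_l+t_l]$ be using more than $C-c_l$ machines, and because $c_l\le\alpha C$ this means more than $(1-\alpha)C$ machines. Now spread $v_l$ over the reservations of $M$ alive at $\tau_l$ in proportion to their machine counts. A fixed $j\in J(M)$, running on $[s_j,s_j+t_j]$, collects value only from those $l$ with $\tau_l\in[s_j,s_j+t_j]$; as $t_l\le t_{\max}=2$, every such $l$'s $OPT$-window lies inside $[s_j-2,s_j+t_j+2]$, and since these $l$ are mutually $OPT$-feasible their total machine-time is at most $C(t_j+4)$, hence total value at most $2C(t_j+4)$. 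Thus $j$ collects strictly less than $2C(t_j+4)\cdot\frac{c_j}{(1-\alpha)C}=\frac{2c_j(t_j+4)}{1-\alpha}$, which by $t_j\ge t_{\min}=1$ (so $t_j+4\le 5t_j$) is at most $\frac{10c_jt_j}{1-\alpha}\le\frac{10}{1-\alpha}v_j$. Summing over $j\in J(M)$ gives $v(L)<\frac{10}{1-\alpha}v(M)$, and then $v(OPT)\le v(M)+v(L)<\frac{11-\alpha}{1-\alpha}v(M)$, i.e. $v(M)>\frac{1-\alpha}{11-\alpha}v(OPT)$.

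For revenue I would follow the route of Theorem~\ref{lemm:ga2b}: because \textsc{Greedy}'s acceptances are value-independent, for the worst-case analysis one may take every reservation accepted by \textsc{Greedy} to have value density exactly $\rho_{\min}$, in which case \textsc{Greedy}'s total charge equals $v(M)$, so the bound just proved yields $r(M)=v(M)\ge\frac{1-\alpha}{11-\alpha}v(OPT)$; since the optimal revenue never exceeds the optimal social welfare, the revenue competitive ratio is also $\frac{1-\alpha}{11-\alpha}$.

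The one genuinely non-routine ingredient is the blocked-instant claim — converting "\textsc{Greedy} rejects $l$'' into "some instant inside $OPT$'s window for $l$ is loaded above $(1-\alpha)C$''. This is exactly where the cap $c_j/C\le\alpha$ is indispensable: without it $C-c_l$ can be arbitrarily small and the charging gives nothing, which is precisely why the bare \textsc{Greedy} rule can beat the general-case guarantee in this regime. Everything after that is bookkeeping, but the two length constants ($t_{\max}=2$, producing the window $t_j+4$, and $t_{\min}=1$, producing the factor $5$) must be tracked carefully to land on exactly $\frac{1-\alpha}{11-\alpha}$ rather than a weaker ratio, and making the revenue benchmark line up with the social-welfare benchmark is the other point that needs a little care.
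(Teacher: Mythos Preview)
Your proposal is correct and follows essentially the same charging argument as the paper's proof sketch: both use that a rejected $l\in L$ forces more than $C-c_l\ge(1-\alpha)C$ machines to be busy at some moment inside $OPT$'s window for $l$, spread $v_l$ proportionally to machine counts among the $M$-jobs alive then, and bound the total charge to any $j\in J(M)$ by $\frac{10c_jt_j}{1-\alpha}$, yielding $v(L)\le\frac{10}{1-\alpha}v(M)$ and hence the ratio $\frac{1-\alpha}{11-\alpha}$. Your write-up is in fact a bit cleaner than the paper's (you name the blocked instant $\tau_l$ explicitly and combine $v(OPT)\le v(M)+v(L)$ directly, whereas the paper's sketch does a superfluous two-case split on the size of $v(L)$), and your treatment of truthfulness and revenue mirrors the paper's handling in Theorem~\ref{lemm:ga2b} exactly.
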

\subsection{Hardness Analysis }
In this subsection we show an upper bound on competitive ratio for any (randomized) mechanism when $k=T=2$. We first review a powerful tool - Yao's Min-Max Principle, which is usually adopted to analyze the performance of randomized algorithms.

\noindent{\bf Yao's Min-Max Principle}\ \ Given a problem $P$, $\mathcal{A}$ is the set of all the deterministic algorithms to solve $P$. Suppose $\mathcal{I}$ is an input distribution. For any $I\in \mathcal{I}$, let $R(I)$ (resp. $A(I)$) stand for the competitive ratio of a randomized (resp. deterministic) algorithm $R$ (resp. $A$). Then we have $\min_{I\in \mathcal{I}}R(I)\leq \max_{A\in \mathcal{A}}E_{I\sim\mathcal{I}}(A(I))$.

Leveraging Yao's Min-Max Principle, we can get the following result.

\begin{theo}
	\label{theo:2bound}
	For $k=T=2$, no (randomized) mechanism can achieve a competitive ratio better than $\frac{1}{3}$ for social welfare.
\end{theo}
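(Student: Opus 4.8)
The plan is to apply Yao's Min-Max Principle: it suffices to exhibit a distribution $\mathcal{I}$ over input sequences such that every deterministic truthful mechanism achieves expected competitive ratio at most $\frac{1}{3}$ on $\mathcal{I}$. First I would construct a small family of instances that forces an irrevocable online decision to be ``wrong'' with constant probability. The natural candidate exploits the huge spread in resource demand $c_j$ (from $1$ to $C$): present a ``small'' reservation first — say $c_1 = 1$, length $t_1 = 1$, value density $\rho_{\min}$, with a tight window $[0,1]$ so it occupies exactly one slot — and then with some probability follow it with a ``big'' reservation needing essentially all $C$ machines in the same interval $[0,1]$, with the maximum allowed value density (so value $\approx \rho_{\max} C = 2C$ up to the length factor). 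A deterministic mechanism sees only the first reservation and must commit; if it accepts the small one it blocks the big one, and if it rejects the small one it may get nothing (if no big one arrives).

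Concretely, I would take the distribution to be: with probability $q$ the sequence is just the single small reservation (value $v_{\text{small}}$), and with probability $1-q$ it is the small reservation followed by the big reservation (combined optimum $\approx v_{\text{big}}$, since they conflict). A deterministic mechanism that accepts the small reservation gets $v_{\text{small}}$ always, for a ratio of $v_{\text{small}}/v_{\text{small}} = 1$ in the first case but only $v_{\text{small}}/v_{\text{big}}$ in the second; one that rejects it gets $0$ in the first case and at most $v_{\text{big}}/v_{\text{big}} = 1$ in the second. Choosing the parameters so that $v_{\text{big}} = 2v_{\text{small}}$ (e.g. $t_{\text{small}} = 1$, $c_{\text{big}} = C$, $t_{\text{big}} = 1$, $v_{\text{big}} = 2C\rho_{\min}$, $v_{\text{small}} = C\rho_{\min}$ — note $v_{\text{small}}$ need not have $c_1 = 1$; I can just take $c_1$ so that $v_{\text{small}}$ is whatever is convenient while still blocking the big job), the ``accept'' strategy has expected ratio $q + (1-q)\cdot\frac12$ and the ``reject'' strategy has expected ratio $1-q$. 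Setting these equal, $q + (1-q)/2 = 1-q$, gives $q = 1/3$ and common value $2/3$ — which is not tight. I would instead iterate or deepen the instance (a short chain of increasingly valuable conflicting reservations, each a factor of $2$ larger, truncated at a random point) so that the best deterministic response is driven down to $1/3$; alternatively, use three distinct ``stopping points'' in the chain so that no deterministic commitment pattern beats $\frac13$ on the resulting uniform mixture.

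The main obstacle is getting the constant exactly $\frac{1}{3}$ rather than merely ``some constant $<1$'': a single two-outcome instance only yields $2/3$, so the construction must be layered, and one must check that (a) each layer respects the constraints $k \le 2$, $T \le 2$ (all value densities in $[\rho_{\min}, 2\rho_{\min}]$ and all lengths in $[1,2]$), (b) the conflicts are genuine so that $OPT$ on each prefix is exactly the last reservation in the chain, and (c) the truthfulness restriction does not give the mechanism extra power to distinguish branches (it does not, since the prefixes are literally identical across branches, so a deterministic mechanism's behavior on the shared prefix is forced). Once the chain has the right depth and the stopping distribution is chosen to equalize the payoffs of all ``commit at step $i$'' strategies, Yao's principle immediately yields the bound $\frac13$ for social welfare.
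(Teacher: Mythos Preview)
Your high-level framework is exactly the paper's: apply Yao's principle to a sequence $B_1,\dots,B_n$ of increasingly valuable, mutually conflicting bundles, reveal a uniformly random prefix, and observe that the best deterministic response is to commit to $B_1$, giving expected ratio $\tfrac{1}{n}\sum_{i=1}^{n} v(B_1)/v(B_i)$.

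The genuine gap is that you have not shown how to make $n$ large enough while respecting $k,T\le 2$ \emph{and} keeping all levels mutually conflicting. With three stopping points and values $1,2,4$ the accept-first strategy achieves $\tfrac{1}{3}(1+\tfrac12+\tfrac14)=\tfrac{7}{12}$, not $\tfrac13$; one needs roughly six doublings before $\tfrac{1}{n}\sum_{i<n}2^{-i}$ drops below $\tfrac13$. Length and density together contribute only a factor $kT=4$, i.e.\ two doublings. Your idea of exploiting the spread of $c_j$ from $1$ to $C$ does not work directly for a chain: two time-overlapping reservations with demands $c$ and $c'$ conflict only when $c+c'>C$, so early levels with small $c$ would not block later levels at all.

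The paper resolves this by keeping every reservation at $c\ge C/2+1$ (so any two time-overlapping reservations automatically conflict) and then extracting four separate sources of value growth: length $1\to 2$, density $1\to 2$, resource count $C/2{+}1\to C$, and finally replacing a single centered reservation by a \emph{bundle} of two or three non-overlapping reservations packed in time around the original window. This yields six bundles with value ratios $1{:}2{:}4{:}8{:}16{:}20$ and expected ratio $\tfrac{1}{6}(1+\tfrac12+\tfrac14+\tfrac18+\tfrac{1}{16}+\tfrac{1}{20})=\tfrac13-\tfrac{1}{480}$. Without the last two ideas your chain stalls after two doublings and cannot reach $\tfrac13$.
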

\begin{proof}
	Consider the following 6 bundles of reservations:
	{\small
	\begin{equation*}
		\begin{split}
			B_1 &= \{(2-\varepsilon,3+\varepsilon,1+2\varepsilon,C/2+1, (1+2\varepsilon)(C/2+1))\},\\
			B_2 &= \{(1.5,3.5,2,C/2+1, 2(C/2+1))\},\\
			B_3 &= \{(1.5,3.5,2,C/2+1, 4(C/2+1))\},\\
			B_4 &= \{(0.5,2.5,2,C/2+1,4(C/2+1)),\\
				&\qquad(2.5,4.5,2,C/2+1,4(C/2+1))\},\\
			B_5 &= \{(0.5,2.5,2,C,4C),\,(2.5,4.5,2,C,4C)\},\\
			B_6 &= \{(0,2,2,C,4C),\,(3,5,2,C,4C),\\
			&\qquad(2,3,1,C,2C)\}.
		\end{split}
	\end{equation*}
	}It is easy to check that the reservations in the same bundle are compatible to each other. Now we construct the input distribution $\mathcal{I}$ which contains six input instance $I_1,\cdots,I_6$. In $I_i$, the reservations are submitted exactly in the order $B_1\rightarrow B_2\rightarrow \cdots\rightarrow B_i$ and their submission time is earlier than time $0$. Jobs in the same bundle come at the same time. The input distribution $\mathcal{I}$ is constructed by uniformly choosing one instance from $I_1$ to $I_6$. Any valid deterministic mechanism must belong to one of following patterns.
	\begin{enumerate}
		\item Accept all reservations in only one bundle from $B_1$ to $B_6$;
		\item Choose 2 compatible reservations from $B_4,\,B_5$ and $B_6$ to accept.
	\end{enumerate}
	Besides, we can find that the optimal allocation on input $I_i$ is to accept all reservations in $B_i$.
	Enumerating all deterministic mechanisms, we can easily conclude that the best one is to accept the reservation in $B_1$ for any input instance. Thus, when $\varepsilon\rightarrow 0$ and $C$ is large enough, the expected competitive ratio of the optimal deterministic algorithm is
%\vspace{-1mm}
	%{\small
	\begin{equation}
		\begin{split}
			&\lim_{\varepsilon\rightarrow 0, C\rightarrow \infty}\frac{1}{6}\sum_{i=1}^6\frac{v(B_1)}{OPT(I_i)}\\
			=&\frac{1}{6}\left(1+\frac{1}{2}+\frac{1}{4}+\frac{1}{8}+\frac{1}{16}+\frac{1}{20}\right)
			=\frac{1}{3}-\frac{1}{480}.
		\end{split}
	\end{equation}
	%}
	According to Yao's Min-Max Principle, $\frac{1}{3}-\frac{1}{480}$ is an upper bound for any randomized mechanism.
\end{proof}
\section{Mechanism for General $k$ and $T$}
In this section, we consider the general case in which $k$ and $T$ can be arbitrary large. We first present a truthful randomized mechanism with competitive ratio $\frac{1}{42\ceil{\log k}\ceil{\log T}}$ for both social welfare and revenue. We then show that no randomized mechanism can achieve a ratio better than $\frac{2}{\log 8kT}$.
Note that the base of all $\log$ terms is $2$ in this section.

The mechanism, named as \textsc{Binary-Filter}, is shown in Algorithm \ref{alg:algorithmforarbitrary}. It randomly chooses three integers $u$, $v$ and $i$, and sets a price for each reservation based on these three values. Reservations whose values are less than corresponding prices are filtered out. Then rest reservations are scheduled greedily.
\begin{algorithm}
	\label{alg:algorithmforarbitrary}
	\caption{{\textsc{Binary-Filter}}}
	\Input {A sequence of reservations.}
	%\State Initial $i[K]$ from input
	Uniformly choose an integer $u$ from $[1,\ceil{\log k}]$, and an integer $v$ from $[1,\ceil{\log T}]$.\\
	Uniformly choose an integer $i$ from $\{0,1\}$.\label{line:s}\\
	%Ignore all reservations whose unit value isn't located in the interval $[2^{i-1}, 2^{i}]$ or whose time needed is
	%Let $J_{i,j}=\{l\,|\,l\in J,\ v_l \in [2^{i-1},2^i],\ t_l \in[2^{j-1}, 2^{j}]\}.$
	\While{a reservation $l$ comes online}
	{
	%\If{$\rho_l \notin [2^{i-1},2^i]$ or $t_l \notin[2^{j-1}, 2^{j}]$ \label{line:4}}
	%	{
	%		Reject $l$.%this reservation by setting $p_l=+\infty$.
	%	}
		%\If{$s = 0$ and $c_l> \frac{C}{2}$\label{line:6}}
		%{
		%	Reject $l$\label{line:7}.
		%}
		%\If{$s = 1$ and $c_l\leq \frac{C}{2}$}
		%{
		%	Reject $l$.
		%}
		Set $p_l = 2^{u-1}\cdot \max\left\{\left(\frac{C}{2}\right)^i,c_l\right\}\cdot\max\{2^{v-1},t_l\}$.\\
		\If{$l$ can be scheduled in $[a_l, d_l]$ and $v_l\geq p_l$}
		{
			Accept $l$ and schedule it as early as possible.\\
		}
		\textbf{else} Reject $l$.
	}
\end{algorithm}

Let $J_{uv}$ denote the set of reservations whose value densities are located in $[2^{u-1},2^u]$ and lengths are located in $[2^{v-1},2^v]$. Denote the optimal allocation on $J_{uv}$ as $OPT_{uv}$ and denote $M_{uv}$ as the allocation obtained by \textsc{Binary-Filter} when $u$ and $v$ are sampled. We can prove a relationship between $E(v(M_{uv}))$ and $v(OPT_{uv})$ which is shown in Lemma \ref{sec4:lemm:mopt}. As the proof is constructed based on that of Theorem \ref{lemm:ga2b}, we omit it to avoid duplication. Full proof can be found in the supplemental material. %Appendix B of the full version.

\newcommand{\lemmmopt}{For any $1\leq u\leq \ceil{\log k}$ and $1\leq v\leq \ceil{\log T}$, $E(v(M_{uv}))\geq \frac{1}{42}v(OPT_{uv})$}

\begin{lemm}
	\label{sec4:lemm:mopt}
	\lemmmopt
\end{lemm}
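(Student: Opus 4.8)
The plan is to reduce Lemma~\ref{sec4:lemm:mopt} to the analysis already carried out for Theorem~\ref{lemm:ga2b}, by observing that \textsc{Binary-Filter}, conditioned on the sampled pair $(u,v)$, behaves on the class $J_{uv}$ exactly like \textsc{Random-Pricing} behaves on a warmup instance with $k=T=2$. Fix $u$ and $v$. For a reservation $l\in J_{uv}$ we have $\rho_l\in[2^{u-1},2^u]$ and $t_l\in[2^{v-1},2^v]$, so writing $\rho_{\min}'=2^{u-1}$, $t_{\min}'=2^{v-1}$ the ratios of max to min density and length within $J_{uv}$ are both at most $2$. Moreover the price \textsc{Binary-Filter} assigns, $p_l=2^{u-1}\cdot\max\{(C/2)^i,c_l\}\cdot\max\{2^{v-1},t_l\}$, is precisely $\rho_{\min}'\cdot\max\{(C/2)^i,c_l\}\cdot\max\{t_{\min}',t_l\}$, which is the \textsc{Random-Pricing} price formula after rescaling. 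Hence the first step is to make this correspondence explicit and note that the only remaining random bit, $i\in\{0,1\}$, plays exactly the role it played in Algorithm~\ref{alg:smpgreedy}.

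Next I would re-run the two-case argument of Theorem~\ref{lemm:ga2b}(2) verbatim, with the constants $\rho_{\min},t_{\min}$ replaced by $\rho_{\min}',t_{\min}'$ and $\rho_{\max},t_{\max}$ replaced by $2\rho_{\min}',2t_{\min}'$. Concretely: restrict attention to reservations in $J_{uv}$ only, let $OPT_{uv}$ be split into $OPT_{uv,0}$ (resources $\le C/2$) and $OPT_{uv,1}$ (resources $>C/2$) with $v(OPT_{uv,0})+v(OPT_{uv,1})\ge v(OPT_{uv})$, and let $M_{uv}^{(i)}$ be the allocation when the bit equals $i$. For $i=1$, each rejected-but-filtered reservation $l$ conflicts with an accepted $j$ of value $v_j\ge \rho_{\min}'\,t_j\,\max\{C/2,c_j\}$, its conflict set $F_j$ lives in the window $[r-2t_{\min}',r+t_j+2t_{\min}']$ of total value at most $2\rho_{\min}'\cdot C\cdot(4t_{\min}'+t_j)$, and since $t_j\ge t_{\min}'$ one gets $\frac{t_j}{4t_{\min}'+t_j}\ge\frac15$, so $v_j/v(F_j)\ge\frac{1}{20}$, exactly as in~(\ref{sec3:equ:f1}). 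The $i=0$ case mirrors~(\ref{sec3:equ:f2}) with the same rescaling, again yielding the factor $\frac{1}{20}$, hence $v(M_{uv}^{(i)})\ge\frac{1}{21}v(OPT_{uv,i})$ for each $i$. Averaging over $i$ gives $E(v(M_{uv}))=\tfrac12 v(M_{uv}^{(0)})+\tfrac12 v(M_{uv}^{(1)})\ge\frac{1}{42}\bigl(v(OPT_{uv,0})+v(OPT_{uv,1})\bigr)\ge\frac{1}{42}v(OPT_{uv})$.

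The one genuinely new point that must be checked — and which I expect to be the main (if minor) obstacle — is the worst-case normalization argument. In Theorem~\ref{lemm:ga2b} one assumes accepted reservations have density $\rho_{\min}$ and $OPT$'s reservations have density $\rho_{\max}=2$, justified by a value-inflation argument. Here the same move must be made inside the class $J_{uv}$: accepted reservations can be taken to have density $2^{u-1}$ and reservations in $OPT_{uv}$ to have density $2^u$, which is legitimate because inflating the density of any $j\in J_{uv}$ up to $2^u$ keeps it in $J_{uv}$, can only increase $v(OPT_{uv})$, and does not change whether \textsc{Binary-Filter} would have filtered it (the price does not depend on $v_j$). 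One should also note the length bounds $t_j\ge 2^{v-1}=t_{\min}'$ are used only through the inequality $\frac{t_j}{4t_{\min}'+t_j}\ge\frac15$ and through the width of the conflict window, both of which are scale-invariant. Because every inequality in the original proof is homogeneous in $\rho_{\min}$ and $t_{\min}$, no constant degrades, and the bound $\frac{1}{42}$ is preserved; this is exactly why the paper can safely defer the full write-up to the supplement.
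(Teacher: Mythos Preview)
Your overall route is the paper's route: fix $(u,v)$, split $OPT_{uv}$ according to $c_l\le C/2$ versus $c_l>C/2$, and rerun the charging argument of Theorem~\ref{lemm:ga2b} with $\rho_{\min}$ and $t_{\min}$ replaced by $2^{u-1}$ and $2^{v-1}$. So the plan is right and the constants do not degrade, but there is a real gap in your reduction step.

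The problem is the sentence ``restrict attention to reservations in $J_{uv}$ only'' together with your later use of ``since $t_j\ge t_{\min}'$''. The allocation $M_{uv}$ is produced by \textsc{Binary-Filter} on the \emph{full} input, and it can accept reservations that are not in $J_{uv}$: any reservation with $\rho_l\ge 2^{u-1}$ can pass the filter, including ones with $t_l<2^{v-1}$ (short, high-density) or $t_l>2^{v}$. Such an accepted $j\notin J_{uv}$ may be exactly the job that blocks some $l\in J(OPT_{uv})$, so you cannot assume $t_j\ge t_{\min}'$ for the blocking job. With only the lower bound $v_j\ge \rho_{\min}'\,t_j\,\max\{C/2,c_j\}$ that you wrote, the ratio $\dfrac{t_j}{4t_{\min}'+t_j}$ can drop below $\tfrac{1}{5}$ when $t_j<t_{\min}'$, and the $\tfrac{1}{20}$ bound fails. (A concrete instance: take $u=v=2$, a blocking job with $t_j=\tfrac12$ and large density; your displayed inequality gives only $\tfrac{1/2}{8.5}<\tfrac15$.)

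The fix, and this is precisely what the paper's proof does, is to keep the $\max\{2^{v-1},t_j\}$ factor from the price in the numerator rather than replacing it by $t_j$. Then the key ratio becomes
\[
\frac{v_j}{v(F_j)}\;\ge\;\frac{2^{u-1}\,\max\{C/2,c_j\}\,\max\{2^{v-1},t_j\}}{2^{u}\,C\,(2^{v+1}+t_j)}
\;=\;\frac{1}{4}\cdot\frac{\max\{2^{v-1},t_j\}}{2^{v+1}+t_j},
\]
and one checks that $\dfrac{\max\{2^{v-1},t_j\}}{2^{v+1}+t_j}$ is minimized at $t_j=2^{v-1}$ over all $t_j\ge 1$, giving $\tfrac{1}{5}$ and hence $\tfrac{1}{20}$. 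The same correction applies verbatim to the $i=0$ case and to~(\ref{sec3:equ:f2}). Once you carry the $\max\{t_{\min}',t_j\}$ through, the rest of your write-up (the split into $OPT_{uv,0},OPT_{uv,1}$, the averaging over $i$, and the density-normalization remark) is fine and matches the paper.
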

\begin{theo}
	The mechanism \textsc{Binary-Filter} 
	\begin{enumerate}	
		\item [\brm{1}] is truthful;
		\item [\brm{2}] achieves a competitive ratio $\frac{1}{42\ceil{\log k}\ceil{\log T}}$ for social welfare and
		\item [\brm{3}] achieves a competitive ratio $\frac{1}{42\ceil{\log k}\ceil{\log T}}$ for revenue.
	\end{enumerate}
\end{theo}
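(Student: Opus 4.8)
The plan is to bootstrap all three parts from the simple‑case analysis of Theorem~\ref{lemm:ga2b} together with Lemma~\ref{sec4:lemm:mopt}. For truthfulness, note that the integers $u,v,i$ are drawn once, at the outset, before any reservation is revealed, so the randomness is independent of the reports and it suffices to argue truthfulness for an arbitrary fixed $(u,v,i)$. By the model's one‑sided misreporting assumption it is enough to rule out $\hat a_j>a_j$, $\hat d_j<d_j$, $\hat t_j>t_j$, $\hat c_j>c_j$. I would observe that the price $p_l=2^{u-1}\max\{(C/2)^i,c_l\}\max\{2^{v-1},t_l\}$ does not depend on $v_l$, so misreporting the value cannot help; that $p_l$ is non‑decreasing in $c_l$ and in $t_l$, so inflating either can only raise the charge; and that reporting a narrower window $[\hat a_l,\hat d_l]$ only makes the scheduling test harder to pass. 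Hence truthtelling is weakly dominant, exactly as in part~\brm{1} of Theorem~\ref{lemm:ga2b}.

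For social welfare I would combine Lemma~\ref{sec4:lemm:mopt} with a covering argument. Since every reservation has $\rho_j\in[1,k]$ and $t_j\in[1,T]$, the $\ceil{\log k}\cdot\ceil{\log T}$ classes $J_{uv}$ jointly contain every reservation; partition $J(OPT)$ into pieces $P_{uv}\subseteq J_{uv}$. The allocation $OPT$ restricted to $P_{uv}$ is a feasible schedule for the instance $J_{uv}$, so $v(OPT_{uv})\ge v(P_{uv})$, and summing over all $(u,v)$ gives $\sum_{u,v}v(OPT_{uv})\ge v(OPT)$. Because $u$ and $v$ are uniform and independent, conditioning on the sampled $(u,v)$ and then invoking Lemma~\ref{sec4:lemm:mopt} (whose expectation is taken over the remaining bit $i$),
\[
E(v(M))=\frac{1}{\ceil{\log k}\ceil{\log T}}\sum_{u,v}E(v(M_{uv}))\ \ge\ \frac{1}{42\ceil{\log k}\ceil{\log T}}\sum_{u,v}v(OPT_{uv})\ \ge\ \frac{v(OPT)}{42\ceil{\log k}\ceil{\log T}}.
\]

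For revenue I would mirror part~\brm{3} of Theorem~\ref{lemm:ga2b}. For the purpose of the worst‑case bound one may assume every reservation accepted in class $J_{uv}$ sits at value density $2^{u-1}$, the density \textsc{Binary-Filter} uses to form its threshold; then, using $t_j\ge 2^{v-1}$ for reservations in $J_{uv}$ (and noting that when $c_j$ is below the relevant cutoff the threshold is only larger, so the adversary sets $v_j=p_j$ anyway), we get $r(M)=v(M)$ on the worst instance. The welfare bound just derived then gives $E(r(M))\ge v(OPT)/(42\ceil{\log k}\ceil{\log T})$, and since all customers are rational and no price exceeds the corresponding value, $r(OPT)\le v(OPT)$; hence the same competitive ratio holds against the optimal revenue.

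The covering step and the linearity‑of‑expectation bookkeeping are routine. The one point needing care is the revenue reduction: I would have to confirm that restricting attention to instances whose accepted reservations lie at the bottom of their value‑density class is without loss for the worst‑case ratio — equivalently, that no extra constant is lost in passing from $E(v(M))$ to $E(r(M))$ — by tracing the proof of Lemma~\ref{sec4:lemm:mopt} and checking that every use of an accepted reservation's value can be routed through the filter inequality $v_j\ge p_j$ rather than through $v_j$ itself.
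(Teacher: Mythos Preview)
Your proposal is correct and follows essentially the same route as the paper: truthfulness is reduced to the simple case by observing the price is independent of $v_l$ and non-decreasing in $t_l,c_l$; social welfare is obtained by conditioning on $(u,v)$, applying Lemma~\ref{sec4:lemm:mopt}, and using the covering inequality $\sum_{u,v}v(OPT_{uv})\ge v(OPT)$; and revenue is argued by the same worst-case reduction as in Theorem~\ref{lemm:ga2b}. Your explicit partitioning of $J(OPT)$ into the pieces $P_{uv}$ and your caveat that the revenue step requires tracing the filter inequality $v_j\ge p_j$ through the proof of Lemma~\ref{sec4:lemm:mopt} are exactly the details the paper leaves implicit.
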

\begin{proof}
	\noindent{\bf\brm{1} Truthfulness.} Similar to that of Theorem \ref{lemm:ga2b}, for a customer with jobs $l$ will not cheat on $a_l,d_l$, $c_l$ and $v_l$, thus we only need to show that she will not misreport with $\hat{t}_l>t_l$. When $t_l<2^{j-1}$, $p_l$ is independent of $t_l$; when $t_l\geq 2^{j-1}$, a larger value for $\hat{t}_l$ will increase $p_l$. Thus in either case, the customer will truthfully report $t_l$.

	\noindent{\bf\brm{2} Social Welfare.} 
	According to Lemma \ref{sec4:lemm:mopt},
	{\small
	\begin{equation}
		\begin{split}
			E(v(M)) &= \frac{1}{\ceil{\log k}\ceil{\log T}}\sum_{i=1}^{\ceil{\log k}}\sum_{j=1}^{\ceil{\log T}} E(v(M_{uv}))\\
			&\geq \frac{1}{42\ceil{\log k}\ceil{\log T}}\sum_{i=1}^{\ceil{\log k}}\sum_{j=1}^{\ceil{\log T}} v(OPT_{uv})\\
			&\geq \frac{v(OPT)}{42\ceil{\log k}\ceil{\log T}}.
		\end{split}
	\end{equation}
	}
	
	\noindent{\bf\brm{3} Revenue.} The proof is similar to that of Theorem \ref{lemm:ga2b}.
\end{proof}

\rem{Remark 3.} It has showed that no deterministic algorithm could achieve a competitive ratio better than $\frac{1}{kCT}$ \cite{wang2015selling}. Clearly, in terms of the worst case, \textsc{Binary-Filter} achieves much better performance than any deterministic algorithms for large $k$, $C$ and $T$.

%\vspace{0.1cm}
\rem{Remark 4.} If $\frac{c_j}{C}\leq\alpha\leq \frac{1}{2}$ for all $j$, we can modify the \textsc{Binary-Filter} mechanism by removing line 2 in Algorithm 2 and setting $p_l=2^{u-1}c_l\max\{2^{v-1},t_l\}$. This modified mechanism achieves a competitive ratio $\frac{1-\alpha}{(11-\alpha)\ceil{\log k}\ceil{\log T}}$ for social welfare and revenue.

%\section{Revenue Analysis}
%The most important thing for the cloud provide is revenue. In previous sections we mainly focus on the social welfare. In this section we talk about the revenue achieve by the algorithms we design.
%
%As we showed before, any randomized algorithm can't achieve a competitive ratio better than $\frac{1+\varepsilon}{\log k \log T}$. As each customer is reasonable, no one will pay a price that higher than his value. So we can conclude that the revenue obtained by a mechanism will always no higher than the social welfare it achieves. Thus when we consider competitive ratio, there is an upper bound $O(\frac{1+\varepsilon}{\log kT})$.
\subsection{Hardness Analysis} \label{sec:upperofrandomized}
%In this subsection, we show an upper bound on the competitive ratio for any randomized mechanisms.

\begin{theo}
	\label{theo:generalbound}
	For general $k$ and $T$, no (randomized) mechanism can achieve a competitive ratio better than $\frac{2}{\log 8kT}$ for social welfare.
\end{theo}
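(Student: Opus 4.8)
The plan is to apply Yao's Min-Max Principle, in the same spirit as Theorem~\ref{theo:2bound} but with a much longer chain of reservation bundles that extracts the full slack available in $k$ and $T$. By the principle it is enough to construct an input distribution $\mathcal{I}$, uniform over instances $I_1,\dots,I_N$ with $N=\ceil{\log 8kT}$, on which every deterministic mechanism has expected competitive ratio at most $2/\log 8kT$; Yao then lifts this to all randomized mechanisms.

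First I would build a nested chain of bundles $B_1,\dots,B_N$ generalizing $B_1,\dots,B_6$ of Theorem~\ref{theo:2bound}, with three properties: (a) the reservations inside each $B_m$ are mutually compatible, so $OPT(I_m)$ accepts all of $B_m$; (b) $v(B_m)/v(B_{m-1})\le 2$ for every $m$, with $v(B_1)\approx C/2$ and $v(B_N)\approx 4kT\,v(B_1)$, so that $N$ is as large as the slack permits and every step is a genuine factor of $2$ except possibly the last; and (c) the chain is pairwise \emph{blocking}: as soon as a mechanism accepts any reservation of $B_1\cup\cdots\cup B_j$, it can never accept all of a later bundle. The chain is: $B_1$, a single job of length $1$, density $\rho_{\min}$, and $C/2+1$ resources in a tight window; then a run of bundles keeping density and resource count fixed while doubling the length up to $t_{\max}$; then a run keeping the length at $t_{\max}$ while doubling the density up to $\rho_{\max}$; and finally, mirroring $B_3\to B_4\to B_5\to B_6$, a short run that splits the single job into disjoint equal-length time slots and raises the resource count of each slot from $C/2+1$ to $C$ (for large $C$, again a factor close to $2$). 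Each early single-job bundle is placed so that, scheduled as early as possible in its window, it straddles the boundary between the two later slots; this is exactly what forces property~(c). In $I_m$ the bundles $B_1,\dots,B_m$ arrive in this order, all submitted before time $0$.

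Next, as in Theorem~\ref{theo:2bound}, I would observe that on $\mathcal{I}$ every valid deterministic mechanism reduces to one of finitely many behaviors: either it commits at some level $j^\ast$ --- the first bundle from which it ever accepts a reservation --- and is then blocked, obtaining value at most $v(B_{j^\ast})$ on every $I_m$ with $m\ge j^\ast$ and $0$ on $I_m$ with $m<j^\ast$; or it picks one of the few multi-slot sub-options inside the last couple of bundles. A short enumeration shows the best behavior is to accept $B_1$ on every instance, which by~(b) gives expected ratio
\begin{equation*}
\frac1N\sum_{m=1}^{N}\frac{v(B_1)}{v(B_m)}\le\frac1N\sum_{m=1}^{N}2^{-(m-1)}<\frac2N=\frac{2}{\ceil{\log 8kT}}\le\frac{2}{\log 8kT}.
\end{equation*}

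The main obstacle is the constant, i.e.\ making the chain long enough and the value ratios large enough at once. There is a genuine tension: the hard-instance value can grow by a factor of at most $2$ per bundle, so $N$ is capped at about $\log 8kT$; but when $k$ or $T$ is not a power of $2$ some steps must grow by less than $2$, which inflates $\sum_m v(B_1)/v(B_m)$ and can push the average above $2/\log 8kT$ unless those sub-doubling steps are placed as late as possible in the chain, where their terms $1/v(B_m)$ are negligible. Getting this placement right, verifying the blocking property~(c) precisely under the "schedule as early as possible" rule (which pins down the earliest-start/deadline windows, the half-open overlaps at slot boundaries, and the vanishing $\varepsilon$-perturbations already used for $B_1$ in Theorem~\ref{theo:2bound}), and carrying out the finite case check that no deterministic strategy --- in particular none of the multi-slot ones --- beats "always accept $B_1$", are the delicate parts.
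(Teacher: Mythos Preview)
Your proposal is correct and follows essentially the same route as the paper: Yao's principle applied to a uniform distribution over prefixes of a single chain of mutually blocking bundles whose total values double, first by doubling the length up to $t_{\max}$, then by doubling the density up to $\rho_{\max}$, and finally by raising the resource count from $C/2+1$ to $C$ and splitting into two time slots; the best deterministic strategy is to accept $B_1$, giving the geometric bound $2/N$. The only noteworthy differences are cosmetic: the paper centers all single-job bundles at a common midpoint with tight windows (so the blocking property is immediate rather than relying on an ``as early as possible'' rule), uses only two final bundles (raise resources, then split) rather than mirroring all of $B_3\to B_6$, and implicitly takes $k,T$ to be powers of $2$, whereas you are more explicit about the ceiling and the placement of sub-doubling steps.
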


\begin{proof}
	We prove this theorem by using Yao's Min-Max Principle as for the $2$-bounded case. To use the Yao's Min-Max Principle, we should construct an input distribution $\mathcal{I}$. Without loss of generality, we assume that $C$ is an even integer. Given two positive integers $m$ and $n$, we first construct $m+n+2$ bundles of reservations $B_1,\cdots,B_{m+n+2}$:
{\tiny
	\begin{equation*}
		B_i = \left\{\begin{aligned}
			&\{(2^{n}-2^{i-1}, 2^{n}+2^{i-1}, 2^i, C/2+1, 2^{i-1}(C+2))\},
		   	1\leq i\leq n,&	\\
			&\{(2^{n-1}, 2^{n}+2^{n-1}, 2^{n}, C/2+1, 2^{i-1}(C+2))\},
		   	n+1\leq i\leq n+m,&\\
			&\{(2^{n-1}, 2^{n}+2^{n-1}, 2^{n}, C, 2^{i-1}C)\},
			i = m+n+1,&\\
			&\{(0, 2^{n}, 2^{n}, C, 2^{i-2}C),(2^{n}, 2^{2n}, 2^{n}, C, 2^{i-2}C)\},
			i = m+n+2.&
		\end{aligned}\right.
	\end{equation*}
}
	We can find that for the first $n$ bundles, each reservation has a value density $1$ and the running time changes w.r.t. $i$. For $B_{n+1}$ to $B_{n+m}$, the running time of each reservation keeps unchanged, i.e., $2^n$, but the value density increases w.r.t. $i$. Furthermore, only the reservations in the last two bundles use $C$ resources. It is easy to check that $T=2^{n-1}$ and $k=2^{m}$ for the reservations in these bundles.
	Most importantly, we can find that reservations in different bundles are conflict with each other, as all reservations are tight and valid time intervals of reservations in different bundles overlap with each other. These properties are the preliminary of following analysis.

	The input distribution $\mathcal{I}$ contains $n+m+2$ input instances $I_1,\cdots,I_{n+m+2}$. In the input instance $I_i$, reservations come exactly as the order $B_1\rightarrow B_2 \rightarrow \cdots\rightarrow B_i$. All reservations are submitted before time 0 and reservations in the same bundle have the same submission time. Each input instance is drawn uniformly from the distribution $\mathcal{I}$.

	Suppose $A$ is an arbitrary deterministic algorithm. For any input instance $I$, if $A$ accepts a reservation in $B_j$, it cannot accept any reservation in other bundles as reservations in different bundles are conflict with each other. As $A$ is deterministic, we can conclude that $A$ will always accept the reservations in $B_j$ (if the input instance contains $B_j$) or accept nothing (if the input instance does not contain $B_j$) for any input instance.

	Clearly, when $C$ is large enough, the competitive ratio of $A$ on input instance $I_i$ is
	{\small
	\begin{equation}
		A(I_i) = \left\{
		\begin{aligned}
			&0,& &i < j\\
			&\frac{2^{j-1}}{2^{i-1}}.& &j\leq i\leq m+n+2&
		\end{aligned}\right.
	\end{equation}
	}
	Thus we have
	{\small
	\begin{equation}
		\begin{split}
			&E_{I\sim \mathcal{I}}(A(I)) = \frac{1}{n+m+2}\sum_{i=1}^{n+m+2} A(I_i)\\
			&=\frac{1}{n+m+2}\sum_{i=j}^{n+m+2}\frac{1}{2^{i-j}}
			=\frac{1}{n+m+2}\left(2-\frac{1}{2^{n+m+2-j}}\right).\\
			 %&\leq \frac{1}{n}\sum_{i = 1}^n\frac{1}{u^{i-1}}\\
			 %&\leq \frac{u}{n(u-1)}.
		\end{split}
	\end{equation}
	} As we can see, when $A$ selects the reservation in $B_1$, that is $j=1$, $E_{I\sim\mathcal{I}}(A(I))$ is maximized. That is to say
	{\small
	\begin{equation}
		\begin{split}	
			\max_{A\in \mathcal{A}}E_{i\sim\mathcal{I}}(A(I))&= \frac{1}{n+m+2}\left(2-\frac{1}{2^{n+m+1}}\right)\\
			&\leq \frac{2}{n+m+2}
			=\frac{2}{\log 8kT}.
		\end{split}
	\end{equation}
	}
The last equality is obtained as $n=\log T+1$ and $m=\log k$.

Applying the Yao's Min-Max Principle, we finish the proof.
\end{proof}

\rem{Remark 5.} Theorem \ref{theo:generalbound} does not imply Theorem \ref{theo:2bound}, as the input distribution constructed in the proof of Theorem \ref{theo:generalbound} can only produce an upper bound at least $0.3875$ for $k=T=2$, which is weaker than $\frac{1}{3}$ in Theorem \ref{theo:2bound}.
\section{Future Work}
In this work, we have designed randomized mechanisms for instance reservation in cloud. Our mechanisms are simple and truthful, and have performance guarantee for both social welfare and revenue. There are several directions to explore in the future.

First, the competitive ratios of our mechanisms do not match the upper bound we provided. It is interesting to study whether there exist some methods to narrow down this gap between the upper bound and the lower bound.

Second, as mentioned in the introduction, malleable reservations, which allow the mechanism to decide the number of resources to allocate to a reservation, have been studied in other settings. It is interesting to introduce malleable reservations into instance reservation in cloud and design mechanisms to optimize social welfare and revenue.
\bibliographystyle{aaai}
\bibliography{ref}
\appendix

\noindent{\large\textbf{Supplemental Material}}
\section{Omitted Proofs\label{app:a}}
\noindent\textbf{Theorem \ref{theo:greedy}.} \textsl{\theogreedy}
\begin{proof}(\textsl{Sketch})
	The competitive ratio can be proved by slightly modifying the proof of the case $i=0$ in Theorem \ref{lemm:ga2b}. Here we only provide a proof sketch. 
Let $M$ denote the allocation of \textsc{Greedy} and $L$ denote the set of jobs accepted by the optimal allocation  $OPT$ but not accepted by the \textsc{Greedy}. For a job $l\in L$, $v_l$ is evenly distributed on machines in $M_r^{r+t_l}$. As $c_l/C\leq \alpha$, we know  $|M_{r}^{r+t_l}|\geq C-c_l\geq C(1-\alpha)$. Thus, for any job $j\in J_r^{r+t_l}$, it gets a value no larger than $\frac{v_lc_j}{C(1-\alpha)}$ from $l$. Similar to the proof of Theorem \ref{lemm:ga2b} for the remaining part, we have $\frac{v_j}{D_j}\geq \frac{1-\alpha}{10}$ for $j\in M$. Consequently, it can be obtained that $$v(M)\geq \frac{1-\alpha}{10}v(L).$$ 

If $v(L)\leq \frac{10}{11-\alpha}v(OPT)$, we get this theorem. If $v(L)>\frac{10}{11-\alpha}v(OPT)$, we have $$v(M))\geq \frac{1-\alpha}{10}v(L)\geq \frac{1-\alpha}{11-\alpha}v(OPT).$$
	
Similarly, one can prove the competitive ratio $\frac{1-\alpha}{11-\alpha}$ for revenue. The truthfulness of \textsc{Greedy} can be proved similarly to \textsc{Random-Pricing}.
\end{proof}

\noindent\textbf{Lemma \ref{sec4:lemm:mopt}.} \textsl{\lemmmopt}
\begin{proof}
	For jobs in $J_{uv}$, define $OPT_0$ as the optimal allocation when we only consider those jobs need no more than $C/2$ machines, and define $OPT_1$ as that when we only consider jobs need more than $C/2$ machines. We define and $M_0$ (resp. $M_1$) as the allocation obtained by our mechanism when $i=0$ (reps. $i=1$) given $u$ and $v$ are sampled. We also define $L_i=J(OPT_i)\setminus J(M_i)$ for $i\in \{0,1\}$. Now we have the same notations with the proof of Theorem \ref{lemm:ga2b}. In fact, we can show that Equation (\ref{sec3:equ:f1}) and (\ref{sec3:equ:f2}) are still held, if we can show $v(M_{uv})\geq \frac{1}{42}v(OPT_{uv})$ by the proof of Theorem \ref{lemm:ga2b}, we can show this lemma by simple following the proof of Theorem \ref{lemm:ga2b}. Firstly we consider Equation (\ref{sec3:equ:f1}). As any job in $J_{uv}$, its length is at most $2^{v}$ and its value density is at most $2^u$, Equation \ref{sec3:equ:f1} becomes 
	$$\frac{v_j}{v(F_j)}\geq\frac{2^{u-1}\max\{C/2,c_j\}\max\{2^{v-1},t_v\}}{2^uC(2^{v+1}+t_j)}\geq \frac{1}{20}.$$
	The last inequality is obtained as $c_j > C/2$ and the term $\frac{\max\{2^{v-1},t_j\}}{2^{v+1}+t_j}$ is minimized when $t_j=2^{v-1}$.
	Similarly, when considering Equation \ref{sec3:equ:f2}, we also have $\frac{v_j}{D_j}\geq \frac{1}{20}$. Thus follow the proof of Theorem \ref{lemm:ga2b}, we have $E(v(M_{uv}))\geq \frac{1}{42}v(OPT_{uv})$.
\end{proof}
\end{document}